\newtheorem{remark}{Remark}
\newtheorem{lemma}{Lemma}
\newtheorem{coro}{Corollary}
\newenvironment{proof}{{\noindent\it Proof:}}{\hfill $\blacksquare$\par}
\newcommand{\dd}{\mathrm{d}}
\begin{document}
%
\title{{Matched-filter Precoded Rate Splitting Multiple Access: A Simple and Energy-efficient Design}}
\author{\IEEEauthorblockN{Hui Zhao, and Dirk Slock}
\IEEEauthorblockA{Communication Systems Department,
EURECOM, Sophia Antipolis, France\\
Email: hui.zhao@eurecom.fr; dirk.slock@eurecom.fr}\vspace{-0.8cm}
}


%


\maketitle

\begin{abstract}
We introduce an energy-efficient downlink rate splitting multiple access (RSMA) scheme, employing a simple matched filter (MF) for precoding. We consider a transmitter equipped with multiple antennas, serving several single-antenna users at the same frequency-time resource, each with distinct message requests. Within the conventional 1-layer RSMA framework, requested messages undergo splitting into common and private streams, which are then precoded separately before transmission. In contrast, we propose a novel strategy where only an MF is employed to precode both the common and private streams in RSMA, promising significantly improved energy efficiency and reduced complexity.  We demonstrate that this MF-precoded RSMA achieves the same delivery performance as conventional RSMA, where the common stream is beamformed using maximal ratio transmission (MRT) and the private streams are precoded by MF. Taking into account imperfect channel state information at the transmitter, we proceed to analyze the delivery performance of the MF-precoded RSMA. We derive the ergodic rates for decoding the common and private streams at a target user respectively in the massive MIMO regime. Finally, numerical simulations validate the accuracy of our analytical models, as well as demonstrate the advantages over conventional RSMA.
\end{abstract}


%
\IEEEpeerreviewmaketitle

\section{Introduction}
Conventional multi-access (MA) schemes like space division MA (SDMA) heavily rely on timely and highly accurate channel state information at the transmitter (CSIT), which poses significant challenges in real wireless communications, particularly for high-speed moving users and hardware impairments. To tackle these challenges, rate-splitting multiple access (RSMA) has emerged as a potent and resilient non-orthogonal transmission framework, efficiently leveraging time, frequency, power, spatial, and code-domain resources \cite{Bruno_Clerckx,Clerckx_Open,Clerckx_Mag}.
RSMA refers to a broad class of multi-user (MU) schemes relying on the rate-splitting (RS) principle. RS involves partitioning each message into multiple segments for inter-user interference management, allowing for flexible decoding by receivers \cite{Mishra_CL}.  The flexibility of RSMA has been shown to suit the inherent dynamics of wireless networks resulting from network density, topology and load, interference levels, and so forth.

Consider the downlink 1-layer RSMA (cf. \cite{Salem,Mishra_CL,Joudeh}) illustrated in Fig.~\ref{1layer_RSMA_fig}, where a multi-antenna transmitter concurrently serves $K$ single-antenna users, each with distinct message requests. Each message is first divided into two components: the common segment and the private segment. Subsequently, the common segments derived from all requested messages are aggregated to form a collective common message beneficial for all the $K$ users. Meanwhile, the private segments are individually encoded into $K$ private streams, each dedicated to a specific user, before undergoing processing by a precoder. The beamformed common stream and the precoded private streams are finally superposed for transmission. On the receiver end, the common stream is prioritized for decoding, allowing the receiver to eliminate interference stemming from the common stream through 1-layer successive interference cancellation (SIC) before decoding the intended private stream. Notably, by adjusting the power distribution between the common and private streams, we gain the flexibility to manage interference levels at receivers effectively.

In contrast to conventional multi-user multiple-input multiple-output (MU-MIMO) systems, RSMA requires the transmitter to design a beamformer for the common stream and a precoder for the private streams separately. This undoubtedly leads to significantly higher energy consumption and increased hardware costs. Furthermore, the optimal beamformer for the common stream, which must address the max-min fairness issue, necessitates solving a non-convex and NP-hard optimization problem in each channel realization \cite{Yuan_Wang}.  Although non-linear dirty paper coding (DPC) can achieve the MU-MIMO channel region, simple linear precoders such as zero-forcing (ZF), regularized ZF (RZF), and matched-filter (MF) are commonly employed in current MU-MIMO systems due to their practicality \cite{Rusek,LuLu,Wagner}. As a result, there has been an increasing interest in developing simple and effective beamformers and linear precoders individually tailored for the common and private streams in RSMA (cf. \cite{Salem,Lei_TCOM,Dai}). Despite these efforts, the integrated design of the beamformer and precoder continues to present more complexity than conventional linear precoding schemes.

It is important to highlight that both ZF and RZF require the inversion of the channel matrix \cite{Wagner}, which demands significant computational resources, particularly in scenarios with massive connectivity. In contrast, MF simply involves taking the conjugate of the channel matrix, resulting in substantially lower complexity \cite{Ngo_TCOM,Feng,Zhao_TWC}. Moreover, real wireless channels often exhibit rank deficiency, further amplifying the computational burden associated with matrix inversion in ZF precoding. Consequently, MF is commonly preferred in some practical implementations. However, it is well-known that MF is interference-limited, causing transmission rates to plateau in the high signal-to-noise ratio (SNR) regime. The work in \cite{Minjue} demonstrates that RSMA effectively controls the interference level associated with MF precoding by judiciously allocating power between the common and private streams, consequently resulting in increased overall throughput.  Furthermore, insights from \cite{Shang} suggest that utilizing an MF matrix for precoding the common stream can outperform a vector beamformer in terms of transmission rate. However, it is essential to acknowledge that this approach involves processing signals with several times more dimensions than those handled by the vector beamformer.

This paper presents a novel precoding scheme for the downlink 1-layer RSMA, wherein an MF matrix is utilized to precode both the common and private streams. This approach eliminates the need for designing a dedicated beamformer for the common stream and allows for better control of interference levels in MF for the private streams. Then, we analyze the corresponding delivery performance in the widely considered massive MIMO regime (cf. \cite{Onur_Dizdar,Dai}). The \emph{technical contributions} are outlined below.

\begin{itemize}
    \item We will demonstrate that MF-precoded RSMA achieves equivalent delivery performance compared to the conventional RSMA, where maximum
ratio transmission (MRT) and MF are employed to process the common and private streams separately.
    \item In the presence of imperfect CSIT, we will rigorously establish that the transmission rate for the common stream converges in distribution to a specific random variable under an average power constraint.
    \item Once more, under imperfect CSIT, we will rigorously prove that the rate for sending a private stream converges almost surely to a constant.\footnote{To the best of the authors’ knowledge, \emph{this is the first time to present a rigorous proof to reveal the convergence results of MRT and MF} under an average transmit power constraint in the massive MIMO regime. We refer to \cite{Feng,Zhao_TWC,Ngo_TCOM} and \cite{Onur_Dizdar,QiZhang,Chae} for some prior efforts.}
    \item Utilizing the aforementioned convergence results, we will derive the ergodic rates individually for transmitting the common and private streams to a target user.\footnote{\emph{Notations.} We use $\mathbb{C}$ to denote the complex number set, and use ${\bf I}_L$ to denote the $L \times L$ identity matrix. For a matrix ${\bf A}$, we use ${\bf A}^H$, ${\bf A}^T$ and ${\bf A}^*$ to denote the conjugate transpose, the non-conjugate transpose and the conjugate part of ${\bf A}$ respectively. $|\cdot|$ denotes the magnitude of a complex number, while $||\cdot||$ denotes the norm-2 of a vector. $\mathbb{E}\{\cdot\}$ and ${\rm Tr}\{\cdot\}$ represent the expectation operator and the trace operator respectively. ${\rm Diag}\{{\bf a}\}$ denotes the diagonal matrix where the elements in $\bf a$ are the diagonal elements. ${\bf 1}_L \in \mathbb{C}^{L \times 1}$ stands for the vector with all elements equaling 1, while ${\bf 0}_L \times \mathbb{C}^{L \times 1}$ is the vector with all zero elements. $\mathcal{CN}$ denotes the complex Gaussian distribution. We use $\stackrel{d.}{\longrightarrow} $, $\stackrel{a.s.}{\longrightarrow} $ and $\stackrel{\text{L}^1}{\longrightarrow} $ to denote the convergence in distribution, the almost sure convergence and the convergence in the mean of order 1 respectively.}
\end{itemize}

\begin{figure*}
    \centering
    \includegraphics[width= 5 in]{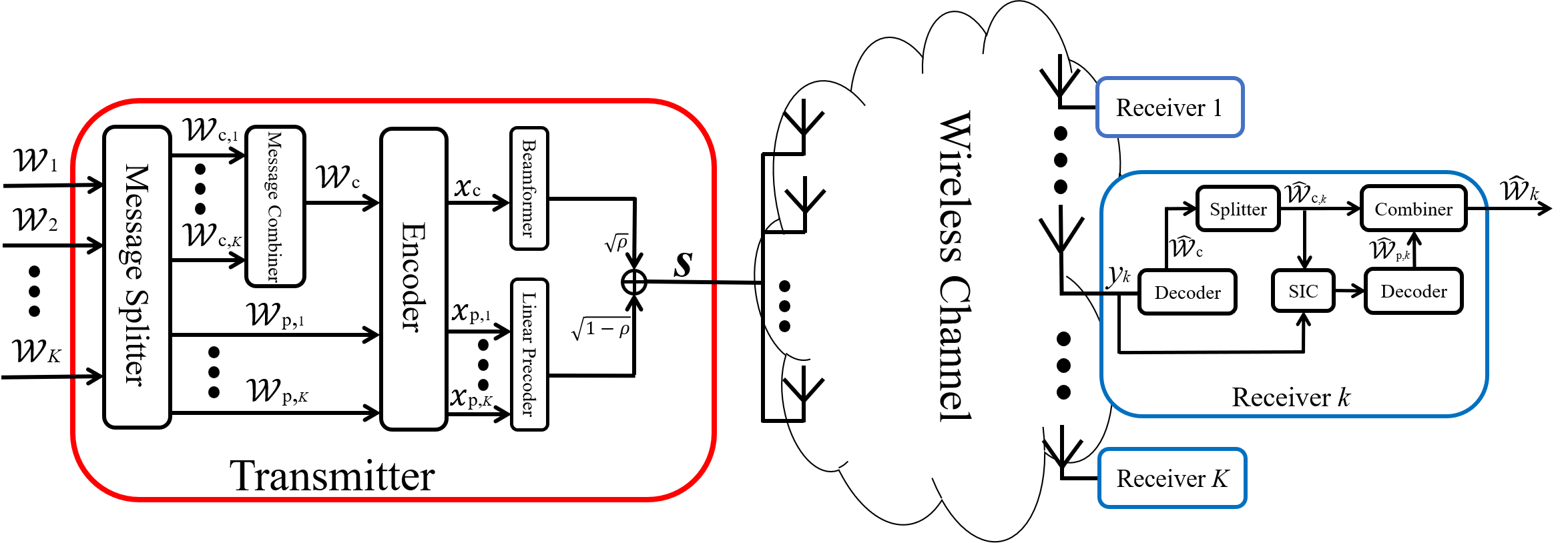}
    \caption{Conventional 1-layer RSMA with linear precoder for downlink 
 transmission}\vspace{-0.3cm}\label{1layer_RSMA_fig}
\end{figure*}

\section{System Model}
In the downlink transmission, a base station (BS) has $L$ transmit antennas to serve $K$ single-antenna users at a time, with each user requesting different messages.

\subsection{Conventional 1-layer RSMA With Linear Precoder}
In this subsection, we will elaborate on the conventional 1-layer downlink RSMA under linear precoding (cf. \cite{Mishra_CL}). Let $\mathcal{W}_k$ denote the message intended by the $k$-th user for $k = 1,2, \cdots,K$.  As depicted in Fig.~\ref{1layer_RSMA_fig}, under the 1-layer RSMA framework, each message $\mathcal{W}_k$ will be first split into two parts, i.e., the common part $\mathcal{W}_{c,k}$ and the private part $\mathcal{W}_{p,k}$. Then all the common parts generated from $K$ messages $\{\mathcal{W}_{c,1}, \mathcal{W}_{c,2},\cdots, \mathcal{W}_{c,K} \}$ will be combined together to form a common message, which is useful for all users. The $K$ private parts will be \emph{independently} mapped into $K$ private streams, each private stream associating with a dedicated user. Let $x_k$ represent the private stream for the $k$-th user, and let $x_c$ represent the common stream for all the users.  A beamformer ${\bf w}_c \in \mathbb{C}^{L \times 1}$ will be carefully designed for broadcasting the common stream to the served $K$ users, while the private stream vector ${\bf x}_p \triangleq [x_{p,1}, x_{p,2}, \cdots, x_{p,K}]^T$ will be precoded by a matrix ${\bf W}_p = [{\bf w}_{p,1}, {\bf w}_{p,2},\cdots, {\bf w}_{p,K}] \in \mathbb{C}^{L \times K}$. Let ${\bf W} = [ {\bf w}_c \  {\bf W}_p ] \in \mathbb{C}^{L \times (K+1)}$ and ${\bf x} = [x_c, {\bf x}_p^T]^T$. The transmit signal ${\bf s} \in \mathbb{C}^{L \times 1}$ at the BS can be mathematically written as
\begin{align}\label{signal_RSMA}
    {\bf s}  = \sqrt{\alpha}{\bf W} {\bf P} {\bf x} =  \sqrt{\alpha \rho} {\bf w}_c x_c + \sqrt{\alpha (1-\rho)} {\bf W}_p {\bf x}_p,
\end{align}
where $\rho \in [0,1]$ is the power-splitting factor between the common and the private streams\footnote{We note that the total power of $(1 - \rho)P_t$ allocated to the $K$ private streams can be numerically optimized to enhance the delivery performance. In this paper, we adhere to the equal-power allocation as adopted in \cite{Salem,Shang}.}, and ${\bf P} \triangleq \text{Diag}\{ \sqrt{\rho}, \sqrt{1-\rho}, \cdots, \sqrt{1-\rho} \} \in \mathbb{C}^{(K+1)\times (K+1)} $. In \eqref{signal_RSMA},
$\alpha \triangleq \frac{P_t}{{\rm Tr} \{ \mathbb{E}\{{\bf W}^H {\bf W}\} {\bf P}^2  \}}$, denoted as the power normalization factor, serves to confine the transmit power within an average constraint of $P_t$. It is easy to check that $
    \mathbb{E}\{||{\bf s}||^2\} = \alpha \mathbb{E}\big\{ {\bf x}^H {\bf P} {\bf W}^H {\bf W} {\bf P} {\bf x} \big\} 
     = \alpha {\rm Tr}\big\{ \mathbb{E}\{{\bf W}^H {\bf W}\} {\bf P}^2  \big\} = P_t.$

The received signal at the $k$-th user then takes the form
\begin{align}
    y_k = \sqrt{\alpha \rho} {\bf h}_k^T {\bf w}_c x_c + \sqrt{\alpha \bar \rho} {\bf h}_k^T {\bf W}_p {\bf x}_p + n_k,
\end{align}
where $\bar \rho = 1-\rho $, and $n_k \sim \mathcal{CN}(0,\sigma_k^2)$ denotes the Additive White Gaussian noise (AWGN).  Under the usual Gaussian signalling assumption, the signal-to-interference-plus-noise ratio (SINR) for decoding the common stream at the $k$-th user is of the form\footnote{ To maintain low complexity and alleviate the need for intricate numerical optimizations, MRT (${\bf w}_c^{\rm mrt} = \sum_{i=1}^K {\bf h}_k^*$) is often employed to efficiently perform beamforming for the common stream in RSMA (cf. \cite{Salem,Minjue}).}
\begin{align}
    \text{SINR}_{c,k} = \frac{\alpha \rho |{\bf h}_k^T {\bf w}_c|^2}{\sigma_k^2 + \alpha \bar \rho \sum_{i=1}^K |{\bf h}_k^T {\bf w}_{p,i}|^2 }
\end{align}
After decoding the common stream successfully, the $k$-th user can remove it from the received signal via 1-layer SIC before decoding the intended private stream $x_{p,k}$. Therefore, the SINR for decoding $x_{p,k}$ at the $k$-th user  takes the form
\begin{align}
    \text{SINR}_{p,k} = \frac{\alpha \bar \rho |{\bf h}_k^T {\bf w}_{p,k}|^2 }{\sigma_k^2 + \alpha \bar \rho \sum_{i=1, i\neq k}^K |{\bf h}_k^T {\bf w}_{p,i}|^2 }
\end{align}
The sum rate of this RSMA system is of the form \cite{Salem,Mishra_CL}
\begin{align}
    R_{\rm sum} = \log_2 \big( 1 + \text{SINR}_{c}\big) + \sum_{k=1}^K \log_2 \big( 1 + \text{SINR}_{p,k} \big),
\end{align}
where $\text{SINR}_{c} \triangleq \min\{ \text{SINR}_{c,1}, \cdots, \text{SINR}_{c,K}\}$ guarantees the successful decoding of $x_c$ at all served users.

\subsection{MF-precoded RSMA Design}
\begin{figure}[!t]
        \centering
        \includegraphics[width= 2.8 in]{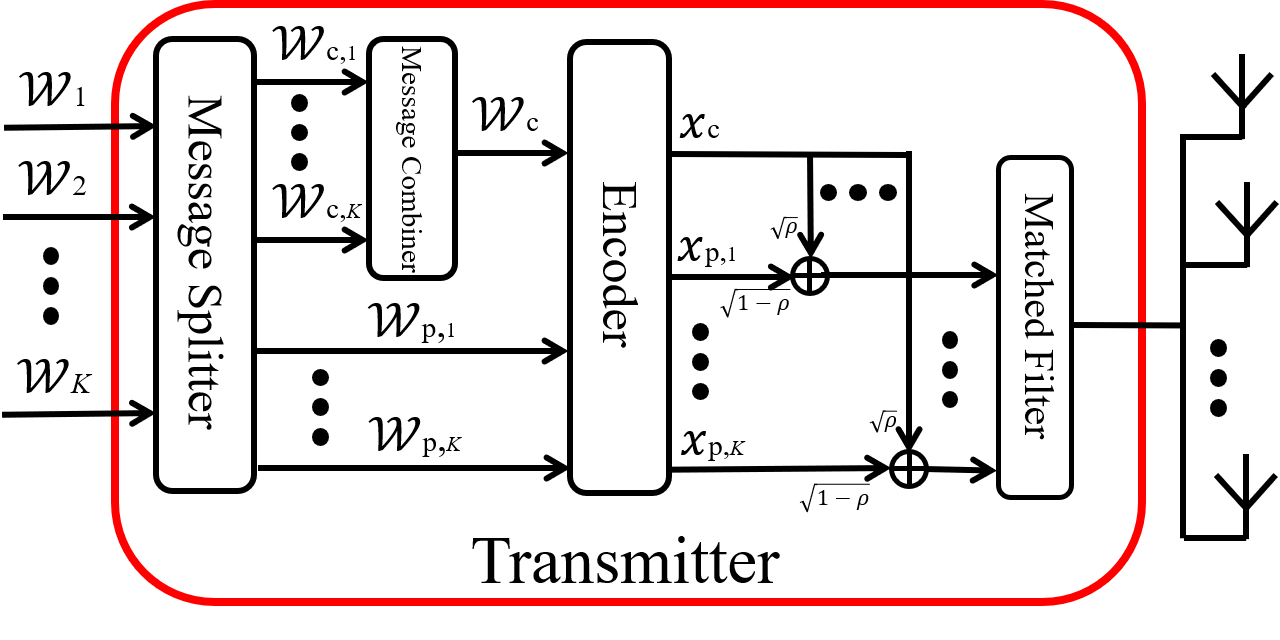}
        \caption{The transmitter structure of proposed MF-precoded RSMA}\label{MF_precode_RSMA_fig}\vspace{-0.3cm}
\end{figure}	
The proposed MF-precoded RSMA scheme is depicted in Fig.~\ref{MF_precode_RSMA_fig}. In contrast to designing a beamformer dedicated to the common stream, we superpose the common and private streams together before entering an MF precoder\footnote{We note that  MF can be substituted with any high-performance precoder based on numerical optimization.  However, we confine ourselves to employing a simple MF for low complexity in this paper. We also note that MF has been shown to provide very high spectral efficiency which often comes close to the performance of DPC when the BS has a large number of antennas \cite{Rusek}.}. For the real channel matrix ${\bf H} \triangleq [ {\bf h}_1, {\bf h}_2,\cdots, {\bf h}_K ]^T \in \mathbb{C}^{K \times L}$, the transmit signal is mathematically written as
\begin{align}\label{trans_sig_MF}
    {\bf s}_{\rm mf} &= \sqrt{\alpha_{\rm mf}} {\bf H}^H {\bf x}_{\rm mf} = \sqrt{\alpha_{\rm mf}} \sum_{i=1}^K {\bf h}_i^* \big( \sqrt{\rho} x_c + \sqrt{\bar \rho} x_{p,i} \big),
\end{align}
where ${\bf x}_{\rm mf} \triangleq \sqrt{\rho} {\bf 1}_K  x_c + \sqrt{\bar \rho}{\bf x}_p$ is the signal before MF precoding, and again $\rho \in [0,1]$ and $\alpha_{\rm mf}$ denote the power-splitting factor and the power normalization factor respectively. The received signal at the $k$-th user is then of the form
\begin{align}
    &y_k^{\rm mf} 
    =\sqrt{\alpha_{\rm mf} } {\bf h}_k^T \sum_{i=1}^K  \hat {\bf h}_i^* \big( \sqrt{\rho} x_c  + \sqrt{\bar \rho} {x}_{p,i} \big) + n_k \notag\\
    &\hspace{0.5cm}= \sqrt{\alpha_{\rm mf} \rho} \bigg( {\bf h}_k^T \sum_{i=1}^K   {\bf h}_i^* \bigg) x_c + \sqrt{\alpha_{\rm mf} \bar \rho} \sum_{i=1}^K {\bf h}_k^T {\bf h}_i^* x_{p,i} + n_k, \notag
\end{align}
The SINR for decoding $x_c$ at the $k$-th user takes the form
\begin{align}\label{SINR_ck_def_eq}
    \text{SINR}_{c,k}^{\rm mf}  = \frac{\alpha_{\rm mf} \rho \big| {\bf h}_k^T \sum_{i=1}^K   {\bf h}_i^* \big|^2}{\sigma_k^2 + \alpha_{\rm mf} \bar \rho \sum_{i=1}^K  | {\bf h}_k^T {\bf h}_i^* |^2 }.
\end{align}
So $R_{c,k}^{\rm mf} = \log_2 ( 1 + \text{SINR}_{c,k}^{\rm mf} )$ is the maximum transmission rate that the $k$-th user can decode $x_c$ successfully. 
After decoding the common stream, the $k$-th user can remove the common stream from the received signal $y_k$ via SIC before decoding the intended private stream. Therefore, the SINR for decoding $x_{p,k}$ at the $k$-th user is of the form
\begin{align}\label{SINR_pk_def_eq}
    \text{SINR}_{p,k}^{\rm mf} = \frac{\alpha_{\rm mf} \bar \rho |{\bf h}_k^T {\bf h}_k^*|^2 }{\sigma_k^2 + \alpha_{\rm mf}\bar \rho \sum_{i=1,i\neq k}^K |{\bf h}_k^T {\bf h}_i^*|^2 }.
\end{align}
The maximum transmission rate for sending the $k$-th private stream is $R_{p,k}^{\rm mf} = \log_2 (1+ \text{SINR}_{p,k}^{\rm mf})$.

It is easy to see from \eqref{SINR_ck_def_eq} and \eqref{SINR_pk_def_eq} that the SINRs separately for decoding the common and intended private streams at a target user have the same forms as those in the MRT-MF precoded RSMA. So we have the following statement.
\begin{remark}\label{MF_rem}
    Under the average power constraint of $P_t$, the MF-preceded RSMA achieves the same delivery performance as the conventional 1-layer RSMA where MRT beamforms the common stream and MF precodes the $K$ private streams.
\end{remark}

\section{Performance Analysis on MF-Precoded RSMA}
In this section, we will analyze the delivery performance of the MF-precoded RSMA in the massive MIMO regime. 

\subsection{Analytical Model}
In this subsection, we will present some basic assumptions for performance analysis.
The channel vector ${\bf h}_k$ from the BS to the $k$-th user follows a multivariate complex Gaussian distribution, i.e., ${\bf h}_k \sim \mathcal{CN}({\bf 0}_L, \beta_k {\bf I}_L)$, where $\beta_k$ accounts for the large-scale pathloss and/or shadowing \cite{Ngo_TCOM}.
We also assume that the downlink channel training is perfect for the receivers, while the channel feedback to the BS appears with some errors, i.e., imperfect CSIT.  For the maximum likelihood (ML) estimator in TDD training, the ML estimate of ${\bf h}_k$ is $\hat {\bf h}_k = {\bf h}_k + \tilde {\bf h}_k$ where $\tilde {\bf h}_k \sim \mathcal{CN}({\bf 0}_L, \tilde \beta_k {\bf I}_L)$ is the estimation error, which is independent of ${\bf h}_k$ (cf. \cite{Arti}). Obviously, the variance of each element in $\hat {\bf h}_k$ is $\hat \beta_k = \beta_k + \tilde \beta_k$.

In this analytical model, we consider the \emph{delay-tolerant transmission} such that sending the common stream and the $k$-th private stream at ergodic rates guarantees successful decoding at the $k$-th user (cf. \cite{Salem,Minjue,Joudeh}). Accordingly, the ergodic sum rate (ESR) takes the form
\begin{align}
    \bar R_{\rm sum}^{\rm mf} = \min_{k=1,\cdots,K} \big\{ \bar R^{\rm mf}_{c,k} \big\} + \sum_{k=1}^K \bar R^{\rm mf}_{p,k},
\end{align}
where $\bar R^{\rm mf}_{c,k} \triangleq \mathbb{E}\{R^{\rm mf}_{c,k}\}$ and $\bar R^{\rm mf}_{p,k} \triangleq \mathbb{E}\{R^{\rm mf}_{p,k}\}$ with the expectation over channel states.

In the imperfect CSIT case,  the transmit signal at the BS under the MF-precoded RSMA scheme becomes
\begin{align}
    {\bf s}_{\rm mf} = \sqrt{\alpha_{\rm mf}} \hat {\bf H}^H {\bf x}_{\rm mf},
\end{align}
where $\hat {\bf H} \triangleq [ \hat{\bf h}_1, \hat{\bf h}_2, \cdots, \hat{\bf h}_K ]^T$ is the estimated channel at the BS. For the power normalization factor, after some mathematical manipulations, we can easily derive that
\begin{align}
    \alpha_{\rm mf} = \frac{P_t}{L \sum_{k=1}^K \hat \beta_k},
\end{align}
which is the same as the conventional MF precoding under an average power constraint of $P_t$ (cf. \cite{Feng}).

The received signal at the $k$-th user becomes
\begin{align}
    y_k^{\rm mf} 
    = \sqrt{\alpha_{\rm mf} \rho} \bigg( {\bf h}_k^T \sum_{i=1}^K  \hat {\bf h}_i^* \bigg) x_c + \sqrt{\alpha_{\rm mf} \bar \rho} \sum_{i=1}^K {\bf h}_k^T \hat {\bf h}_i^* x_i + n_k,  \notag
\end{align}
which can be further written as \eqref{received_sig_mf}, shown at the top of the next page.
\begin{figure*}
    \begin{align}\label{received_sig_mf}
        y_k^{\rm mf} = \sqrt{\alpha_{\rm mf} \rho} \bigg( {\bf h}_k^T \sum_{i=1}^K  {\bf h}_i^* \bigg) x_c + \sqrt{\alpha_{\rm mf} \rho} \bigg( {\bf h}_k^T \sum_{i=1}^K  \tilde {\bf h}_i^* \bigg) x_c
         +  \sqrt{\alpha_{\rm mf} \bar \rho} \sum_{i=1}^K {\bf h}_k^T  {\bf h}_i^* x_i +  \sqrt{\alpha_{\rm mf} \bar \rho} \sum_{i=1}^K {\bf h}_k^T \tilde {\bf h}_i^* x_i  + n_k
    \end{align}
  \rule{18cm}{0.01cm}  
\end{figure*}
The SINR for decoding $x_c$ is of the form\footnote{We note that each receiver has perfect CSI for decoding, so the impact of imperfect CSIT should appear in the numerator of the SINR expression (cf. \cite{Salem}).  We also note that each receiver can perform perfect SIC in delay-tolerant transmission with imperfect CSIT, provided that the common and private streams are transmitted at their ergodic rates (cf. \cite{Salem,Joudeh}).}
\begin{align}\label{SINR_ck_def}
    \text{SINR}_{c,k}^{\rm mf} = \frac{\alpha_{\rm mf} \rho \big| {\bf h}_k^T \sum_{i=1}^K  {\bf h}_i^* \big|^2 + \alpha_{\rm mf} \rho \sum_{i=1}^K \tilde \beta_i ||{\bf h}_k||^2  }{\sigma_k^2 + \alpha_{\rm mf} \bar \rho \sum_{i=1}^K \big( \big| {\bf h}_k^T {\bf h}_i^* \big|^2 + \tilde \beta_i ||{\bf h}_k||^2  \big)}.
\end{align}
After removing the common stream from the received signal, the signal for decoding $x_{p,k}$ becomes
\begin{align}
    y_{p,k}^{\rm mf} = \sqrt{\alpha_{\rm mf} \bar \rho} \sum_{i=1}^K \big( {\bf h}_k^T  {\bf h}_i^* + {\bf h}_k^T \tilde {\bf h}_i^*  \big)x_i   + n_k
\end{align}
The SINR for decoding $x_{p,k}$ takes the form
\begin{align}\label{SINR_pk_def}
    \text{SINR}_{p,k}^{\rm mf} = \frac{\alpha_{\rm mf} \bar \rho || {\bf h}_k ||^4 + \alpha_{\rm mf} \bar \rho \tilde \beta_k ||{\bf h}_k||^2 }{\sigma_k^2 + \alpha_{\rm mf} \bar \rho \sum_{i=1, i\neq k}^K \big( |{\bf h}_k^T {\bf h}_i^*|^2 + \tilde \beta_i ||{\bf h}_k||^2 \big) }.
\end{align}

\subsection{Convergence Results in Massive MIMO}
In this subsection, we will derive some convergence results \emph{based on rigorous proof} in the massive MIMO regime where both $L$ and $K$ go to infinity with the ratio $\theta = L/K$ fixed.
Before presenting the main results, let us first define some parameters as follows. 
\begin{align}
    \beta_{\rm ave} \triangleq \lim_{K \to \infty}\frac{1}{K} \sum_{i=1}^K \beta_i, \ \ \ \hat \beta_{\rm ave} \triangleq \lim_{K \to \infty} \frac{1}{K} \sum_{i=1}^K \hat \beta_i,
\end{align}
which are both finite.
We define their ratio as $\delta \triangleq \beta_{\rm ave}/ \hat \beta_{\rm ave} \in [0,1]$.
Let $X$ denote a random variable that follows a Noncentral chi-squared distribution with the degrees-of-freedom of $2$ and the non-centrality parameter of $\frac{2 \theta \beta_k}{\beta_{\rm ave}}$, denoted by $X  \sim \chi^2 \big(2,  \frac{2 \theta \beta_k}{\beta_{\rm ave}} \big)$. The probability density function (PDF) of $X$ is of the form
  \begin{align}\label{PDF_chi}
        f_{X} (x) = \frac{1}{2} \exp\Big( - \frac{x }{2} - \frac{\theta \beta_k}{\beta_{\rm ave}} \Big) \ \mathcal{I}_0 \Bigg( \sqrt{\frac{2 \theta \beta_k}{\beta_{\rm ave}} x} \Bigg),
    \end{align}
where $\mathcal{I}_\cdot\{\cdot\}$ denotes the modified Bessel functions of the first kind.  
Now, we can present the following result for $R_{c,k}^{\rm mf}$.
\begin{lemma}\label{SINR_ck_lem}
    As $L,K \to \infty$ with a fixed ratio $\theta$ under imperfect CSIT, the transmission rate for the common stream to the $k$-th user has the convergence result:\footnote{Unlike the convergence results observed in MF, ZF and RZF (cf. \cite{Rusek}) where their rates converge to constants, we uncover an interesting fact: the transmission rate of MRT (cf. Remark~\ref{MF_rem}) converges to a specific random distribution when $L$ and $K$ both approach infinity while with a fixed ratio.}
    \begin{align}\label{R_ck_ergodic_eq}
        R_{c,k}^{\rm mf} \stackrel{d.}{\longrightarrow}  \log_2 \bigg( 1 + \frac{   \beta_k \rho P_t \big( \frac{\delta}{2}X + 1-\delta \big) } { \sigma_k^2 + \bar \rho  \beta_k P_t \big( 1 + \theta \beta_k/\hat \beta_{\rm ave} \big)  }  \bigg).
    \end{align}
\end{lemma}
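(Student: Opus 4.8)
The plan is to take the large-system limit of $\text{SINR}_{c,k}^{\rm mf}$ in \eqref{SINR_ck_def} term by term as $L=\theta K\to\infty$, isolating the \emph{single} term that does not concentrate, and then transfer the limit through the continuous map $x\mapsto\log_2(1+x)$. To isolate that term, write ${\bf g}_k\triangleq\sum_{i\neq k}{\bf h}_i^*$, which is independent of ${\bf h}_k$ with ${\bf g}_k\sim\mathcal{CN}({\bf 0}_L,(\sum_{i\neq k}\beta_i){\bf I}_L)$, so that ${\bf h}_k^T\sum_{i=1}^K{\bf h}_i^*=||{\bf h}_k||^2+{\bf h}_k^T{\bf g}_k$. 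Conditioned on ${\bf h}_k$, the cross term ${\bf h}_k^T{\bf g}_k$ is $\mathcal{CN}(0,||{\bf h}_k||^2\sum_{i\neq k}\beta_i)$; splitting it into real and imaginary parts and completing the square gives the exact identity
\begin{align}\label{chi_plan}
\Big|{\bf h}_k^T\sum_{i=1}^K{\bf h}_i^*\Big|^2=\frac{||{\bf h}_k||^2\sum_{i\neq k}\beta_i}{2}\,X_{L,K},\qquad X_{L,K}\,\big|\,{\bf h}_k\ \sim\ \chi^2\!\Big(2,\,\tfrac{2||{\bf h}_k||^2}{\sum_{i\neq k}\beta_i}\Big),
\end{align}
where, conditionally on ${\bf h}_k$, the non-centrality parameter is ${\bf h}_k$-measurable and the remaining randomness of $X_{L,K}$ is a standard complex Gaussian independent of ${\bf h}_k$. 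Similarly, conditioning on ${\bf h}_k$ gives ${\bf h}_k^T{\bf h}_i^*=\sqrt{||{\bf h}_k||^2\beta_i}\,U_i$ with $\{U_i\}_{i\neq k}$ i.i.d.\ $\mathcal{CN}(0,1)$ and independent of ${\bf h}_k$, so $|{\bf h}_k^T{\bf h}_i^*|^2=||{\bf h}_k||^2\beta_i|U_i|^2$ with $\{|U_i|^2\}$ i.i.d.\ unit-mean.

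Second, I would invoke the strong law of large numbers (SLLN) on every quantity possessing a deterministic limit: over the $L$ i.i.d.\ entries of ${\bf h}_k$, $||{\bf h}_k||^2/L\to\beta_k$ a.s.; by definition $\tfrac1K\sum_{k'}\hat\beta_{k'}\to\hat\beta_{\rm ave}$, $\tfrac1K\sum_{i\neq k}\beta_i\to\beta_{\rm ave}$, and $\tfrac1K\sum_i\tilde\beta_i\to\hat\beta_{\rm ave}-\beta_{\rm ave}$; and, with $\{\beta_i\}$ uniformly bounded, Kolmogorov's SLLN gives $\tfrac1K\sum_{i\neq k}\beta_i|U_i|^2\to\beta_{\rm ave}$ a.s. Consequently the conditional non-centrality parameter in \eqref{chi_plan} converges a.s.\ to $2\theta\beta_k/\beta_{\rm ave}$; since the noncentral-$\chi^2(2,\cdot)$ CDF is continuous and bounded in its non-centrality parameter, dominated convergence applied to $\Pr(X_{L,K}\le x)=\mathbb{E}[\Pr(X_{L,K}\le x\mid{\bf h}_k)]$ yields $X_{L,K}\stackrel{d.}{\longrightarrow}X$ with $X\sim\chi^2(2,2\theta\beta_k/\beta_{\rm ave})$, i.e.\ the law with PDF \eqref{PDF_chi}.

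Third, I would substitute $\alpha_{\rm mf}=P_t/(L\sum_{k'}\hat\beta_{k'})$ and the above into \eqref{SINR_ck_def}. Pulling $1/K$ out of the numerator, every factor other than $X_{L,K}$ converges a.s.\ to a constant, so Slutsky's theorem gives that the numerator converges in distribution to $\beta_k\rho P_t\big(\tfrac{\delta}{2}X+1-\delta\big)$, using $\delta=\beta_{\rm ave}/\hat\beta_{\rm ave}$ and $\beta_{\rm ave}+(\hat\beta_{\rm ave}-\beta_{\rm ave})=\hat\beta_{\rm ave}$. Pulling $1/(LK)$ out of the denominator's interference sum, $||{\bf h}_k||^4/(LK)\to\theta\beta_k^2$ a.s.\ and $\sum_{i\neq k}|{\bf h}_k^T{\bf h}_i^*|^2/(LK)=(||{\bf h}_k||^2/L)\cdot\tfrac1K\sum_{i\neq k}\beta_i|U_i|^2\to\beta_k\beta_{\rm ave}$ a.s., so the denominator converges a.s.\ to $\sigma_k^2+\bar\rho\beta_k P_t\big(1+\theta\beta_k/\hat\beta_{\rm ave}\big)$. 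A further use of Slutsky for the ratio, followed by the continuous mapping theorem for $x\mapsto\log_2(1+x)$, then yields \eqref{R_ck_ergodic_eq}.

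The delicate point is the genuinely two-index limit combined with the fact that — unlike the ZF/RZF cases, where the rate concentrates to a constant — the term $|{\bf h}_k^T\sum_i{\bf h}_i^*|^2$ does \emph{not} concentrate: its limit is the random variable $X$. Hence one must (i) apply the SLLN with the correct $1/(LK)$ normalization and a mild regularity (e.g.\ uniform boundedness) hypothesis on $\{\beta_i\}$ so that the $K$-sums of non-identically-distributed summands still converge a.s., and (ii) be careful to push only the a.s.-convergent constants through Slutsky while deriving the weak convergence $X_{L,K}\stackrel{d.}{\longrightarrow}X$ separately from the continuity of the noncentral-$\chi^2$ law in its non-centrality parameter — the ${\bf h}_k$-measurability noted after \eqref{chi_plan} is exactly what makes this last step clean.
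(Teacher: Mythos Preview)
Your proposal is correct and complete. The overall architecture --- isolate the one non-concentrating term, show everything else converges a.s.\ via SLLN, combine with Slutsky and the continuous mapping theorem --- is the same as the paper's.

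The one genuine difference is how you establish $X_{L,K}\stackrel{d.}{\longrightarrow}\chi^2(2,2\theta\beta_k/\beta_{\rm ave})$. The paper works \emph{unconditionally}: it identifies $X'={\bf u}_k^T\sum_i{\bf h}_i^*$ as a Nakagami variable plus an independent complex Gaussian (the Shadowed-Rician structure), writes down the closed-form MGF of the squared magnitude from \cite{Slim_Rician}, and shows this MGF converges pointwise to the MGF of the target noncentral $\chi^2$ law, invoking L\'evy continuity. You instead work \emph{conditionally}: given ${\bf h}_k$, $X_{L,K}$ is exactly $\chi^2(2,\lambda_K)$ with $\lambda_K=2\|{\bf h}_k\|^2/b$, and you pass the a.s.\ limit $\lambda_K\to 2\theta\beta_k/\beta_{\rm ave}$ through the CDF via dominated convergence. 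Your route is more self-contained (no external MGF formula) and makes explicit the regularity needed for Kolmogorov's SLLN on the non-i.i.d.\ sum $\tfrac{1}{K}\sum_{i\ne k}\beta_i|U_i|^2$, which the paper treats informally. The paper's MGF route, on the other hand, bypasses any conditioning and gives a clean one-line limit computation once the Shadowed-Rician MGF is in hand. Both are valid and lead to the same $X$.
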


\begin{proof}
    The proof is relegated to Appendix~\ref{SINR_ck_lem_proof}.
\end{proof}

\begin{coro}\label{R_ck_coro}
    As $L,K \to \infty$ with a fixed ratio $\theta$, the ergodic rate for decoding the common stream at the $k$-th user takes the form in \eqref{Rck_eq}, shown at the top of the next page.
    \begin{figure*}
     \begin{align}\label{Rck_eq}
            \lim_{L,K\to \infty} \bar R_{c,k}^{\rm mf} =  \frac{1}{2} \int_0^\infty \log_2 \Bigg( 1 +    \frac{   \beta_k \rho P_t \big( \frac{\delta}{2} x + 1-\delta \big) } { \sigma_k^2 + \bar \rho  \beta_k P_t \big( 1 + \theta \beta_k/\hat \beta_{\rm ave} \big)  }\Bigg)   \exp\bigg(- \frac{x }{2}  - \frac{\theta \beta_k}{\beta_{\rm ave}} \bigg)  \ \mathcal{I}_0 \Bigg( \sqrt{\frac{2 \theta \beta_k}{\beta_{\rm ave}} x} \Bigg) \dd x
        \end{align}
         \rule{18cm}{0.01cm}
        \end{figure*}
\end{coro}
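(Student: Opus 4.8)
\emph{Proof strategy.} The plan is to upgrade the convergence in distribution provided by Lemma~\ref{SINR_ck_lem} to convergence of the associated first moments, and then to evaluate the limiting expectation against the density in \eqref{PDF_chi}. Write $g(x) \triangleq \log_2\big( 1 + \beta_k\rho P_t(\tfrac{\delta}{2}x + 1-\delta)/[\sigma_k^2 + \bar\rho\beta_k P_t(1+\theta\beta_k/\hat\beta_{\rm ave})] \big)$, so that Lemma~\ref{SINR_ck_lem} reads $R_{c,k}^{\rm mf}\stackrel{d.}{\longrightarrow} g(X)$ with $X\sim\chi^2(2,\,2\theta\beta_k/\beta_{\rm ave})$. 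Because $g$ is continuous but unbounded (it grows only like $\log x$), the bounded-continuous characterization of weak convergence does not by itself give $\mathbb{E}\{R_{c,k}^{\rm mf}\}\to\mathbb{E}\{g(X)\}$; the missing ingredient is uniform integrability of the family $\{R_{c,k}^{\rm mf}\}_{L,K}$.

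To supply it I would bound the second moments of $R_{c,k}^{\rm mf}$ uniformly in $L$ and $K$. Using $\log_2(1+t)\le t/\ln 2$ for $t\ge 0$ together with \eqref{SINR_ck_def}, one gets $R_{c,k}^{\rm mf}\le (\ln 2)^{-1}\sigma_k^{-2}\alpha_{\rm mf}\rho\big(\big|{\bf h}_k^T\sum_{i=1}^K{\bf h}_i^*\big|^2 + \|{\bf h}_k\|^2\sum_{i=1}^K\tilde\beta_i\big)$, and recall $\alpha_{\rm mf}=P_t/(L\sum_{k=1}^K\hat\beta_k)=\Theta(1/(LK))$. Decomposing ${\bf h}_k^T\sum_{i=1}^K{\bf h}_i^*=\|{\bf h}_k\|^2 + {\bf h}_k^T{\bf g}_k$ with ${\bf g}_k\triangleq\sum_{i\neq k}{\bf h}_i^*$: the quantity $\|{\bf h}_k\|^2/L$ is an average of i.i.d.\ exponential random variables and hence has all moments bounded uniformly in $L$; conditioned on ${\bf g}_k$, the scalar ${\bf h}_k^T{\bf g}_k$ is zero-mean complex Gaussian with variance $\beta_k\|{\bf g}_k\|^2$, while $\|{\bf g}_k\|^2/(LK)$ again has uniformly bounded moments. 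Multiplying through by $\alpha_{\rm mf}$ and collecting the pieces keeps everything $O(1)$ in mean square, so that $\sup_{L,K}\mathbb{E}\{(R_{c,k}^{\rm mf})^2\}<\infty$, which is more than enough to conclude uniform integrability.

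Combining uniform integrability with the weak convergence of Lemma~\ref{SINR_ck_lem} (for instance via Skorokhod's representation theorem followed by Vitali's convergence theorem) yields $\lim_{L,K\to\infty}\bar R_{c,k}^{\rm mf}=\lim_{L,K\to\infty}\mathbb{E}\{R_{c,k}^{\rm mf}\}=\mathbb{E}\{g(X)\}$. It then only remains to write $\mathbb{E}\{g(X)\}=\int_0^\infty g(x)\,f_X(x)\,\dd x$ and substitute the density \eqref{PDF_chi}; this reproduces \eqref{Rck_eq} verbatim and closes the argument.

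I expect the uniform-integrability step to be the main obstacle: convergence in distribution says nothing on its own about the expectation of a slowly growing functional such as $R_{c,k}^{\rm mf}$, so the proof genuinely requires the moment estimates on the normalized Gaussian quadratic forms appearing in \eqref{SINR_ck_def}, and some care is needed because the diagonal term $\|{\bf h}_k\|^4$ and the cross term $|{\bf h}_k^T\sum_{i\neq k}{\bf h}_i^*|^2$ scale differently in $L$ versus $K$ before the normalization by $\alpha_{\rm mf}$ brings them to a common $O(1)$ order.
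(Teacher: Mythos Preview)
Your argument is correct and shares the paper's overall architecture: establish uniform integrability of $\{R_{c,k}^{\rm mf}\}_{L,K}$, combine it with the convergence in distribution from Lemma~\ref{SINR_ck_lem} to pass to the limit in expectation, and then integrate against the density \eqref{PDF_chi}. The difference lies in \emph{how} uniform integrability is obtained. The paper does not go through moment estimates at all; starting from \eqref{SINR_ck_rewrite}, it drops the noise term in the denominator and compares $\big|\sum_i{\bf u}_k^T{\bf h}_i^*\big|^2$ in the numerator against $\sum_i\big|{\bf u}_k^T{\bf h}_i^*\big|^2$ in the denominator to conclude $\text{SINR}_{c,k}^{\rm mf}\le\rho/\bar\rho$ almost surely, so that $R_{c,k}^{\rm mf}\le\log_2(1+\rho/\bar\rho)$ pointwise and uniform integrability is immediate---no Skorokhod/Vitali step or quadratic-form moment bounds are needed. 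Your route, by contrast, dominates $R_{c,k}^{\rm mf}$ by $\alpha_{\rm mf}$ times the signal-plus-error power over $\sigma_k^2$ alone and then shows the resulting Gaussian quadratic forms stay $O(1)$ in $L^2$ after the $\alpha_{\rm mf}=\Theta(1/(LK))$ normalization. This is more laborious, but it is also more robust: it covers the case $\bar\rho\to 0$, where the paper's constant bound degenerates, and it does not hinge on the inequality $\big|\sum_i a_i\big|^2\le\sum_i|a_i|^2$, which in general can fail by a factor of $K$.
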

\begin{proof}
For $\text{SINR}_{c,k}^{\rm mf}$ in \eqref{SINR_ck_rewrite}, we can have that
\begin{align}
   \text{SINR}_{c,k}^{\rm mf} &\le \frac{\rho}{\bar \rho} \frac{ \big|\sum_{i=1}^K  {\bf u}_k^T{\bf h}_i^* \big|^2 + \sum_{i=1}^K \tilde \beta_i  }{    \sum_{i=1}^K  \big| {\bf u}_k^T {\bf h}_i^* \big|^2 +   \sum_{i=1}^K \tilde \beta_i} \notag\\
   & \le \frac{\rho}{\bar \rho} \frac{ \sum_{i=1}^K  \big| {\bf u}_k^T {\bf h}_i^* \big|^2 + \sum_{i=1}^K \tilde \beta_i  }{    \sum_{i=1}^K  \big| {\bf u}_k^T {\bf h}_i^* \big|^2 +   \sum_{i=1}^K \tilde \beta_i} \!= \frac{\rho}{\bar \rho} < \infty,
\end{align}
which implies that $R_{c,k}^{\rm mf}$ is always uniformly integrable. In this case, convergence in distribution implies the convergence in expectation. That is, we have that
  \begin{align}
        &\lim_{L,K\to \infty} \bar R_{c,k}^{\rm mf} = \lim_{L,K\to \infty} \mathbb{E}\{ R_{c,k}^{\rm mf} \}\notag\\
        &\hspace{0.5cm}= \mathbb{E}\bigg\{ \log_2 \bigg( 1 +  \frac{   \beta_k \rho P_t \big( \frac{\delta}{2}X + 1-\delta \big) } { \sigma_k^2 + \bar \rho \beta_k P_t \big( 1 + \theta \beta_k/\hat \beta_{\rm ave} \big)  }\bigg) \bigg\}. 
    \end{align}
    Substituting the PDF of $X$ (cf. \eqref{PDF_chi}) into the above, we arrive at  \eqref{Rck_eq}, which concludes the proof. 
\end{proof}

Next, we will present the convergence result for $R_{p,k}^{\rm mf}$.
\begin{lemma}\label{R_pk_lem}
    In the massive MIMO regime, the rate for sending the $k$-th private stream converges to a constant almost surely,
    \begin{align}
       R_{p,k}^{\rm mf}  \stackrel{a.s.}{\longrightarrow}  \log_2 \bigg( 1 + \frac{\beta_k}{\hat \beta_{\rm ave}} \frac{\theta \bar \rho P_t  \beta_k}{\sigma_k^2  + \bar \rho P_t  \beta_k } \bigg).
    \end{align}
\end{lemma}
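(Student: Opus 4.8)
The plan is to show that the numerator and the denominator of $\text{SINR}_{p,k}^{\rm mf}$ in \eqref{SINR_pk_def} each converge almost surely to a deterministic limit, with the denominator limit strictly positive, so that $\text{SINR}_{p,k}^{\rm mf}$ converges almost surely and, by continuity of $x \mapsto \log_2(1+x)$, so does $R_{p,k}^{\rm mf} = \log_2(1+\text{SINR}_{p,k}^{\rm mf})$. Since $\theta = L/K$ is fixed, I would index the sequence by $K$ with $L = \lceil \theta K \rceil$ and substitute $\alpha_{\rm mf} = P_t/(L\sum_{i=1}^K \hat\beta_i)$ everywhere. After this substitution the only random quantities that matter are $\tfrac{1}{L}||{\bf h}_k||^2$ and $\tfrac{1}{LK}\sum_{i \neq k}|{\bf h}_k^T{\bf h}_i^*|^2$, together with the deterministic averages $\tfrac{1}{K}\sum_{i=1}^K \hat\beta_i \to \hat\beta_{\rm ave}$ and $\tfrac{1}{K}\sum_{i \neq k}\tilde\beta_i \to \hat\beta_{\rm ave} - \beta_{\rm ave}$.

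First I would treat the numerator. Writing $||{\bf h}_k||^2 = \sum_{j=1}^L |h_{k,j}|^2$ as a sum of i.i.d.\ terms with mean $\beta_k$, Kolmogorov's strong law gives $\tfrac{1}{L}||{\bf h}_k||^2 \stackrel{a.s.}{\longrightarrow} \beta_k$, hence $\tfrac{1}{L^2}||{\bf h}_k||^4 \stackrel{a.s.}{\longrightarrow} \beta_k^2$ by the continuous mapping theorem. Then
\[
\alpha_{\rm mf}\bar\rho\,||{\bf h}_k||^4 = \bar\rho P_t \,\frac{L}{K}\, \frac{ \frac{1}{L^2}||{\bf h}_k||^4 }{ \frac{1}{K}\sum_{i=1}^K \hat\beta_i } \;\stackrel{a.s.}{\longrightarrow}\; \frac{\bar\rho P_t\,\theta\,\beta_k^2}{\hat\beta_{\rm ave}},
\]
while $\alpha_{\rm mf}\bar\rho\,\tilde\beta_k\,||{\bf h}_k||^2$ is of order $1/K$ and therefore vanishes almost surely; hence the numerator converges a.s.\ to $\bar\rho P_t\,\theta\,\beta_k^2/\hat\beta_{\rm ave}$.

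The more delicate object is the inter-user interference $\sum_{i \neq k}|{\bf h}_k^T{\bf h}_i^*|^2$ in the denominator. Here I would condition on ${\bf h}_k$: since ${\bf h}_i^* \sim \mathcal{CN}({\bf 0}_L, \beta_i {\bf I}_L)$ is independent of ${\bf h}_k$, one has ${\bf h}_k^T{\bf h}_i^* \mid {\bf h}_k \sim \mathcal{CN}(0, \beta_i ||{\bf h}_k||^2)$, so the variables $|{\bf h}_k^T{\bf h}_i^*|^2 / ||{\bf h}_k||^2$ are, conditionally on ${\bf h}_k$, independent and exponentially distributed with means $\beta_i$. Assuming $\sup_i \beta_i < \infty$ (a standard large-system assumption), Kolmogorov's condition $\sum_i i^{-2}\mathrm{Var}(|{\bf h}_k^T{\bf h}_i^*|^2/||{\bf h}_k||^2) = \sum_i i^{-2}\beta_i^2 < \infty$ yields $\tfrac{1}{K}\sum_{i \neq k}|{\bf h}_k^T{\bf h}_i^*|^2/||{\bf h}_k||^2 \to \beta_{\rm ave}$ almost surely conditionally on ${\bf h}_k$; the conditional limit being the constant $\beta_{\rm ave}$ for a.e.\ realization of ${\bf h}_k$, this lifts to unconditional almost-sure convergence, and multiplying by $\tfrac{1}{L}||{\bf h}_k||^2 \to \beta_k$ gives $\tfrac{1}{LK}\sum_{i \neq k}|{\bf h}_k^T{\bf h}_i^*|^2 \stackrel{a.s.}{\longrightarrow} \beta_k\beta_{\rm ave}$. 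Combining this with $\tfrac{1}{LK}\sum_{i \neq k}\tilde\beta_i||{\bf h}_k||^2 \to \beta_k(\hat\beta_{\rm ave}-\beta_{\rm ave})$, the interference term satisfies
\[
\alpha_{\rm mf}\bar\rho \sum_{i \neq k}\big(|{\bf h}_k^T{\bf h}_i^*|^2 + \tilde\beta_i||{\bf h}_k||^2\big) \;\stackrel{a.s.}{\longrightarrow}\; \frac{\bar\rho P_t\,\beta_k\beta_{\rm ave}}{\hat\beta_{\rm ave}} + \frac{\bar\rho P_t\,\beta_k(\hat\beta_{\rm ave}-\beta_{\rm ave})}{\hat\beta_{\rm ave}} = \bar\rho P_t\,\beta_k,
\]
so the denominator converges a.s.\ to $\sigma_k^2 + \bar\rho P_t\,\beta_k > 0$. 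Intersecting the finitely many almost-sure events and invoking the continuous mapping theorem, $\text{SINR}_{p,k}^{\rm mf} \stackrel{a.s.}{\longrightarrow} \frac{\bar\rho P_t\,\theta\,\beta_k^2/\hat\beta_{\rm ave}}{\sigma_k^2 + \bar\rho P_t\,\beta_k} = \frac{\beta_k}{\hat\beta_{\rm ave}}\,\frac{\theta\bar\rho P_t\,\beta_k}{\sigma_k^2 + \bar\rho P_t\,\beta_k}$, and applying $\log_2(1+\cdot)$ yields the claimed limit.

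I expect the interference term to be the main obstacle: its concentration needs the conditioning trick combined with a strong law for independent but non-identically distributed random variables (hence a uniform bound on the $\beta_i$), and one must argue carefully that conditional almost-sure convergence to the deterministic constant $\beta_{\rm ave}$ lifts to unconditional almost-sure convergence. The remaining steps — the signal and noise terms, the deterministic averages, and passing through $\log_2(1+\cdot)$ — are routine applications of the ordinary strong law and the continuous mapping theorem along $L = \lceil \theta K \rceil$, $K \to \infty$, and run in the same spirit as (but more simply than) the proof of Lemma~\ref{SINR_ck_lem}.
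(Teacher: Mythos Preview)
Your proposal is correct and lands on the same limit. The overall strategy---show numerator and denominator of $\text{SINR}_{p,k}^{\rm mf}$ converge almost surely by the strong law, then apply the continuous mapping theorem---is also the paper's strategy. The one substantive difference is how the interference term $\sum_{i\neq k}|{\bf h}_k^T{\bf h}_i^*|^2$ is tamed. The paper divides numerator and denominator of \eqref{SINR_pk_def} by $||{\bf h}_k||^2 K^{-1}$ at the outset, sets ${\bf u}_k={\bf h}_k/||{\bf h}_k||$, and observes (via \eqref{indepen_proof_eq}) that the scalars ${\bf u}_k^T{\bf h}_i^*$ are \emph{unconditionally} independent $\mathcal{CN}(0,\beta_i)$ random variables; the SLLN then applies directly, with no conditioning step. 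Your route---condition on ${\bf h}_k$, apply Kolmogorov's SLLN to $|{\bf h}_k^T{\bf h}_i^*|^2/||{\bf h}_k||^2$, then lift---is valid, but the lift is a little delicate because the conditioning variable changes dimension with $L$; the cleanest way to justify it is to note that the conditional law of $|{\bf h}_k^T{\bf h}_i^*|^2/||{\bf h}_k||^2$ does not depend on ${\bf h}_k$, which is precisely the paper's unconditional-independence observation in disguise. The paper's normalization buys a shorter argument with no conditioning; your version is more explicit about which strong law is invoked and the boundedness hypothesis $\sup_i\beta_i<\infty$ it requires (a hypothesis the paper uses as well but leaves implicit).
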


\begin{proof}
Dividing both the numerator and denominator of $\text{SINR}_{p,k}^{\rm mf}$ in \eqref{SINR_pk_def} by $||{\bf h}_k||^2 K^{-1}$, we obtain that
   \begin{align}\label{SINR_pk_rewrite}
    \text{SINR}_{p,k}^{\rm mf} = \frac{K\alpha_{\rm mf} \bar \rho || {\bf h}_k ||^2 + K\alpha_{\rm mf} \bar \rho  \tilde \beta_k  }{ \frac{\sigma_k^2 K}{|| {\bf h}_k ||^2} + K \alpha_{\rm mf} \bar \rho  \sum_{i=1, i\neq k}^K \big( \big| {\bf u}_k^T {\bf h}_i^* \big|^2 + \tilde \beta_i  \big) }
\end{align}
For $L,K \to \infty$, using the Strong Law of Large Numbers (SLLN), we can easily have that
\begin{align}
K \alpha_{\rm mf} || {\bf h}_k ||^2 
\stackrel{a.s.}{\longrightarrow} \frac{P_t \beta_k}{\hat \beta_{\rm ave}}, \ 
\frac{K}{||{\bf h}_k||^2}  \stackrel{a.s.}{\longrightarrow} \frac{1}{\theta \beta_k}, \ 
 K \alpha_{\rm mf} 
\stackrel{a.s.}{\longrightarrow} 0. \notag
\end{align}
Considering the independence statement in \eqref{indepen_proof_eq}, we have that
\begin{align}
    &K \alpha_{\rm mf} \sum_{i=1, i\neq k}^K \Big( \big| {\bf u}_k^T {\bf h}_i^* \big|^2 + \tilde \beta_i  \Big) \notag\\
    & \hspace{0.5cm}= \frac{P_t}{L \hat \beta_{\rm ave}} \frac{K-1}{K-1} \bigg( \sum_{i=1, i\neq k}^K \Big( \big| {\bf u}_k^T {\bf h}_i^* \big|^2 -  \beta_i \Big)   + \sum_{i=1,i\neq k}^K \hat \beta_i  \bigg)  \notag\\
    & \hspace{0.5cm} \stackrel{a.s.}{\longrightarrow}  \frac{P_t }{\theta \hat \beta_{\rm ave}} \Big( 0 + \hat \beta_{\rm ave} \Big) = \frac{P_t}{\theta}.
\end{align}
Combining the convergence results listed above easily yields Lemma~\ref{R_pk_lem}, which concludes the proof.
\end{proof}

\begin{coro}\label{R_pk_coro}
    In the massive MIMO regime, the ergodic rate for decoding the $k$-th private stream at the dedicated user is of the form\footnote{We note that various approximations (e.g., \cite{Rusek,Feng,QiZhang,Chae,Zhao_TWC,Ngo_TCOM}) have been developed for the ergodic rate of MF precoding in massive MIMO. Interestingly, these approximations can finally arrive at  \eqref{R_pk_ergodic_eq}. However, this is the first time to rigorously prove this well-known convergence result under the average power constraint. }
    \begin{align}\label{R_pk_ergodic_eq}
        \lim_{L,K \to \infty} \bar R_{p,k}^{\rm mf} = \log_2 \bigg( 1 + \frac{\beta_k}{\hat \beta_{\rm ave}} \frac{\theta \bar \rho P_t  \beta_k}{\sigma_k^2  + \bar \rho P_t  \beta_k } \bigg).
    \end{align}
\end{coro}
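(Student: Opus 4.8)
The plan is to upgrade the almost-sure convergence established in Lemma~\ref{R_pk_lem} to convergence of the ergodic rate $\bar R_{p,k}^{\rm mf} = \mathbb{E}\{R_{p,k}^{\rm mf}\}$. Since the limit in Lemma~\ref{R_pk_lem} is a \emph{deterministic} constant, it suffices to show that the family $\{R_{p,k}^{\rm mf}\}_{L,K}$ is uniformly integrable: almost-sure convergence together with uniform integrability yields $\text{L}^1$-convergence, hence $\mathbb{E}\{R_{p,k}^{\rm mf}\}$ converges to that same constant, which is precisely the right-hand side of \eqref{R_pk_ergodic_eq}. This mirrors the structure of the proof of Corollary~\ref{R_ck_coro}, except that the uniform-integrability step is less immediate here.

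Whereas $\text{SINR}_{c,k}^{\rm mf}$ was bounded above by the constant $\rho/\bar\rho$, the private-stream SINR in \eqref{SINR_pk_def} is not uniformly bounded, so I would verify uniform integrability via a uniform moment bound (a de la Vall\'ee-Poussin-type criterion). Keeping only $\sigma_k^2$ in the denominator of \eqref{SINR_pk_def} gives
\[
R_{p,k}^{\rm mf} \le \log_2\!\bigg( 1 + \frac{\alpha_{\rm mf}\bar\rho\, \|{\bf h}_k\|^2\big(\|{\bf h}_k\|^2 + \tilde\beta_k\big)}{\sigma_k^2}\bigg).
\]
Because $\frac{1}{K}\sum_{j=1}^K \hat\beta_j \to \hat\beta_{\rm ave} > 0$ and $K = L/\theta$, for all sufficiently large $L$ we have $\alpha_{\rm mf} = P_t/(L\sum_{j}\hat\beta_j) \le 2P_t\theta/(L^2\hat\beta_{\rm ave})$; substituting this shows that $R_{p,k}^{\rm mf}$ is dominated by $\log_2\big(1 + c_1(\|{\bf h}_k\|^2/L)^2 + c_2 \|{\bf h}_k\|^2/L^2\big)$ for constants $c_1,c_2$ independent of $L$ and $K$.

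It then remains to observe that $\|{\bf h}_k\|^2$ is $\beta_k$ times a Gamma random variable of shape $L$, so $\|{\bf h}_k\|^2/L$ has moments of every order bounded uniformly in $L$ (indeed $\mathbb{E}\{(\|{\bf h}_k\|^2/L)^m\} = \beta_k^m\,\Gamma(L+m)/(L^m\Gamma(L)) \to \beta_k^m$). Combining this with the elementary inequality $\big(\log_2(1+x)\big)^2 \le c_3(1+x)$ for $x \ge 0$ gives $\sup_{L,K}\mathbb{E}\{(R_{p,k}^{\rm mf})^2\} < \infty$, which implies that $\{R_{p,k}^{\rm mf}\}_{L,K}$ is uniformly integrable. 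Together with Lemma~\ref{R_pk_lem} this yields $\bar R_{p,k}^{\rm mf} \to \log_2\big(1 + \tfrac{\beta_k}{\hat\beta_{\rm ave}}\tfrac{\theta\bar\rho P_t\beta_k}{\sigma_k^2 + \bar\rho P_t\beta_k}\big)$, i.e.\ \eqref{R_pk_ergodic_eq}.

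The main obstacle is precisely this uniform-integrability verification: one has to control a moment of $R_{p,k}^{\rm mf}$ simultaneously in $L$ and $K$, which hinges on the cancellation between the $\Theta(L^{-2})$ decay of the power-normalization factor $\alpha_{\rm mf}$ and the $\Theta(L^2)$ growth of $\|{\bf h}_k\|^4$. Once this is in place, no additional probabilistic work is needed, as the almost-sure limit is already delivered by Lemma~\ref{R_pk_lem}.
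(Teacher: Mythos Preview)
Your proposal is correct and follows essentially the same route as the paper: establish uniform integrability of $\{R_{p,k}^{\rm mf}\}_{L,K}$ via a uniform second-moment bound, then combine with Lemma~\ref{R_pk_lem} and the Vitali convergence theorem to obtain $L^1$-convergence and hence \eqref{R_pk_ergodic_eq}. The only cosmetic difference is in how the second moment is controlled: the paper uses the cruder inequality $R_{p,k}^{\rm mf}\le \text{SINR}_{p,k}^{\rm mf}/\ln 2$ and asserts (without details) that $\sup_{L,K}\mathbb{E}\{(\text{SINR}_{p,k}^{\rm mf})^2\}<\infty$ from \eqref{SINR_pk_rewrite}, whereas you keep the logarithm and bound $(\log_2(1+x))^2\le c_3(1+x)$ after reducing to a quantity depending only on $\|{\bf h}_k\|^2/L$; your version is more explicit about the uniformity in $L,K$ but otherwise equivalent.
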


\begin{proof}
    We first have that
    \begin{align}
        R_{p,k}^{\rm mf} = \frac{\ln ( 1 + \text{SINR}_{p,k}^{\rm mf} )}{\ln 2}  \le \frac{\text{SINR}_{p,k}^{\rm mf}}{\ln 2} 
    \end{align}
    Based on \eqref{SINR_pk_rewrite}, we can easily prove that $\mathbb{E}\{(\text{SINR}_{p,k}^{\rm mf}\big)^2 \} < \infty$, so $R_{p,k}^{\rm mf}$ is uniformly integrable. 
    According to the Vitali Convergence Theorem and Lemma~\ref{R_pk_lem}, we have that
    \begin{align}
        R_{p,k}^{\rm mf}  \stackrel{\text{L}^1}{\longrightarrow}  \log_2 \bigg( 1 + \frac{\beta_k}{\hat \beta_{\rm ave}} \frac{\theta \bar \rho P_t  \beta_k}{\sigma_k^2  + \bar \rho P_t  \beta_k } \bigg),
    \end{align}
    which directly yields \eqref{R_pk_ergodic_eq}, and which concludes the proof.
\end{proof}

\section{Numerical Results}
In this section, we conduct numerical experiments to showcase the precision of the derived formulas and to provide comparisons with other RSMA designs.

In Figs. \ref{ESR_L_fig_ref}--\ref{ESR_N_fig_ref}, we examine the symmetric fading scenario, where each user shares identical channel statistics. Moreover, $\beta_k = \sigma_k^2 = 1$ is assumed for any $k=1,2,\cdots,K$. For the channel estimation model, we consider that $\tilde \beta_k = (P_t N)^{-1}$, where $N$ is the number of training symbols, and which enables an enhancement in the quality of channel estimation as the SNR rises. As shown in Fig. \ref{ESR_L_fig_ref}, ESR exhibits a linear growth as both $L$ and $K$ increase with a fixed ratio $\theta$. As $P_t$ increases, the ESR also rises, attributable to the heightened transmit power and enhanced channel estimation.
We also observe a close alignment between the analytical results obtained from Corollaries \ref{R_ck_coro} and \ref{R_pk_coro} and the simulated outcomes, particularly in the regime of large $L$. However, this correspondence is not as strong for small $L$ and high transmit power. The discrepancy primarily arises from Corollary \ref{R_pk_coro} regarding the ergodic rate under conventional MF precoding, which proves to be an inadequate approximation in the medium to high SNR range for finite numbers of $L$ and $K$ (cf. \cite{Feng,Zhao_MF}).\footnote{Unfortunately, the magnitude of this discrepancy becomes too significant to disregard in the simulation settings in Figs. \ref{ESR_N_fig_ref}--\ref{ESR_Com_L16_ref}. We note that a more precise approximation can be obtained by employing the methods outlined in our latest work \cite{Zhao_MF}, but we have reserved this for follow-up works. For that, we utilize the simulated result for  $\bar R_{p,k}^{\rm mf}$ while employing Corollary \ref{R_ck_coro} for $\bar R_{c,k}^{\rm mf}$ to derive the analytical results presented in Figs. \ref{ESR_N_fig_ref}--\ref{ESR_Com_L16_ref}.}

\begin{figure}[!t]
             \vspace{-0.2cm}
             \centering
             \includegraphics[width= 3.3 in]{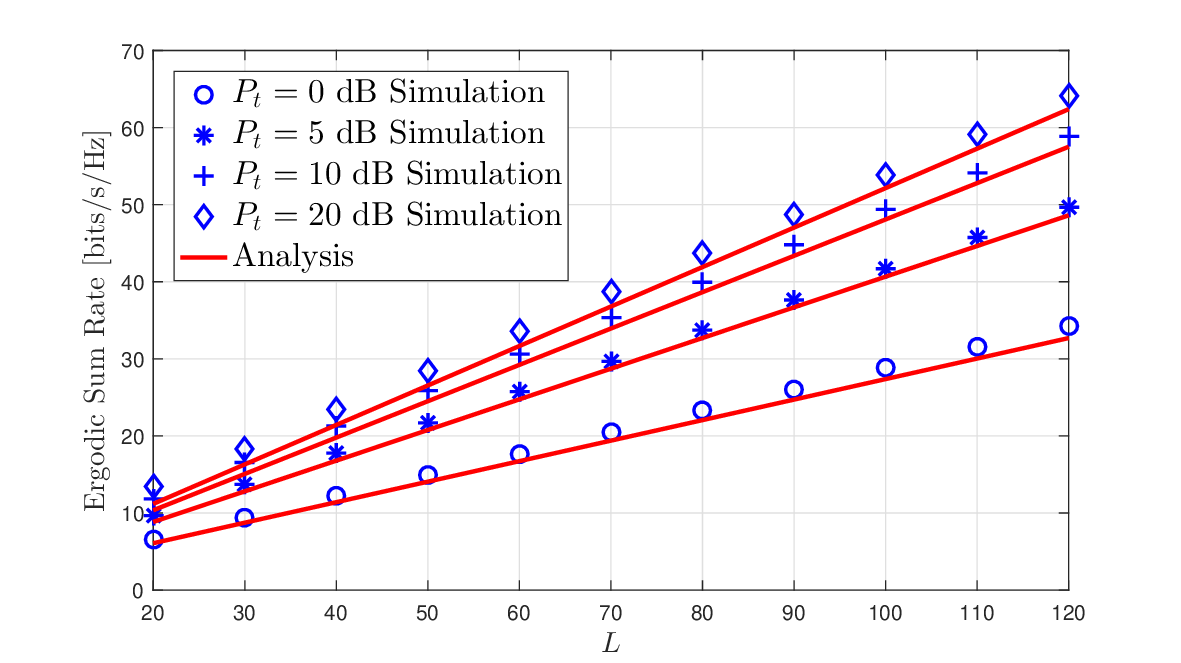}
             \vspace{-0.3cm}
        \caption{Ergodic sum rate versus $\rho$ for $\theta=5$, $\rho=0.5$, and $N=10$.}\vspace{-0.3cm}\label{ESR_L_fig_ref}
\end{figure}

In Fig.~\ref{ESR_N_fig_ref}, the ESR is plotted against $\rho$ for various values of $N$. It is evident that employing more training symbols results in a higher ESR in the low and medium $\rho$ regime. However, as $\rho$ approaches 1 (indicating purely MU multicasting), the ESRs across different channel estimation qualities converge. This suggests that the decoding of the common stream is not significantly affected by the estimation quality. Compared to conventional MF precoding, RSMA enables us to control the interference level between the common and private streams, thereby achieving a higher rate by adjusting the value of $\rho$.

\begin{figure}[!t]
             \vspace{-0.2cm}
             \centering
             \includegraphics[width= 3.3 in]{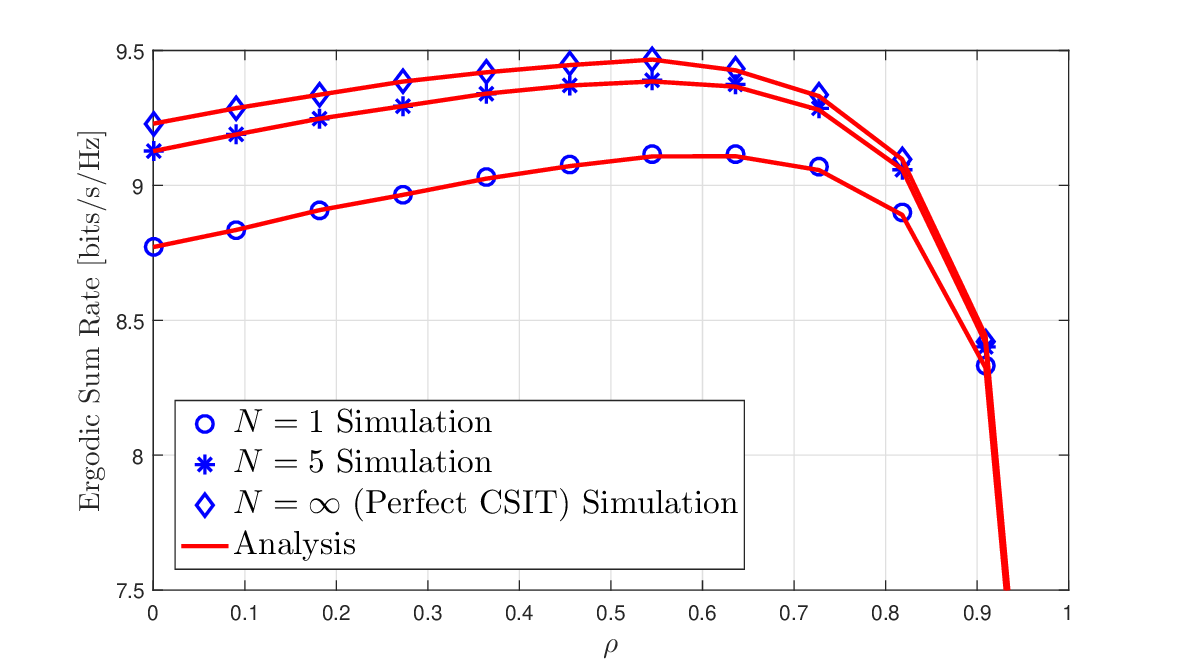}
             \vspace{-0.3cm}
        \caption{Ergodic sum rate versus $\rho$ for $L=12$, $K=4$, and  $P_t=10$ dB.}\vspace{-0.3cm}\label{ESR_N_fig_ref}
\end{figure}

In Figs. \ref{ESR_Com_L8_ref}--\ref{ESR_Com_L16_ref}, we explore a realistic wireless propagation scenario within a Macro-cell environment \cite{Emil_PathlossModel}. Specifically, we consider an AWGN spectral density of $-174$ dBm/Hz, with a spectrum bandwidth of 20 MHz allocated for each user. We generate 1000 realizations of user locations, assuming a uniform distribution of users across a Macro-cell with an inner radius of 35 meters and an outer radius of 500 meters. 
Assuming a carrier frequency of 2 GHz, we have that  $\beta_k = \mathcal{L}_0 r_k^{-\eta}$, where $r_k$ is the distance from the BS to the $k$-th user, $\eta=3.76$ denotes the pathloss exponent, and $\mathcal{L}_0 = 10^{-3.53}$ regulates the channel attenuation at 35 meters \cite{Emil_PathlossModel}. 
We assume $P_t = 40$ dBm, which is a typical transmit power at a Macro-cell BS \cite{GSMA_guide}. Additionally, we employ ZF and RZF precoders separately for the private streams, while the common stream is always beamformed by MRT in the conventional 1-layer RSMA (cf. \cite{Salem}). These represent two benchmark schemes for the proposed MF-precoded RSMA. To ease the simulation, we consider perfect CSIT where $\tilde \beta_k =0$.

In Figs. \ref{ESR_Com_L8_ref}--\ref{ESR_Com_L16_ref}, we observe that MF-precoded RSMA outperforms MRT-ZF preceded RSMA when $L = K = 8$, whereas the comparison is reversed when the number of transmit antennas increases from 8 to 16. As anticipated, MRT-RZF consistently delivers the best performance. Additionally, more numerical results (omitted due to space constraints) fully demonstrate that MF often outperforms ZF for $\theta = L/K \le 1$. It is worth noting that in dense Macro-cell environments, the number of served users can reach several thousand \cite{GSMA_guide}, making scenarios where $\theta = L/K \le 1$ quite common.

\begin{figure}[!t]
             \vspace{-0.2cm}
             \centering
             \includegraphics[width= 3.3 in]{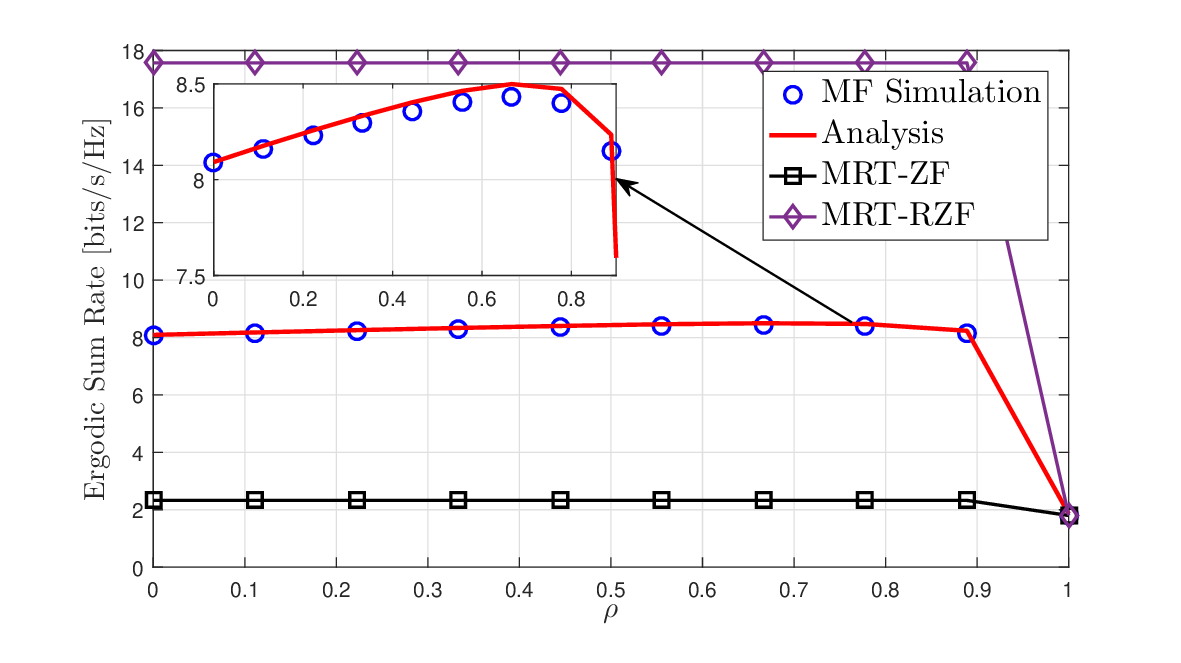}
             \vspace{-0.5cm}
        \caption{Ergodic sum rate versus $\rho$ for $L=K=8$.}\vspace{-0.2cm}\label{ESR_Com_L8_ref}
\end{figure}

\begin{figure}[!t]
             \vspace{-0.2cm}
             \centering
             \includegraphics[width= 3.3 in]{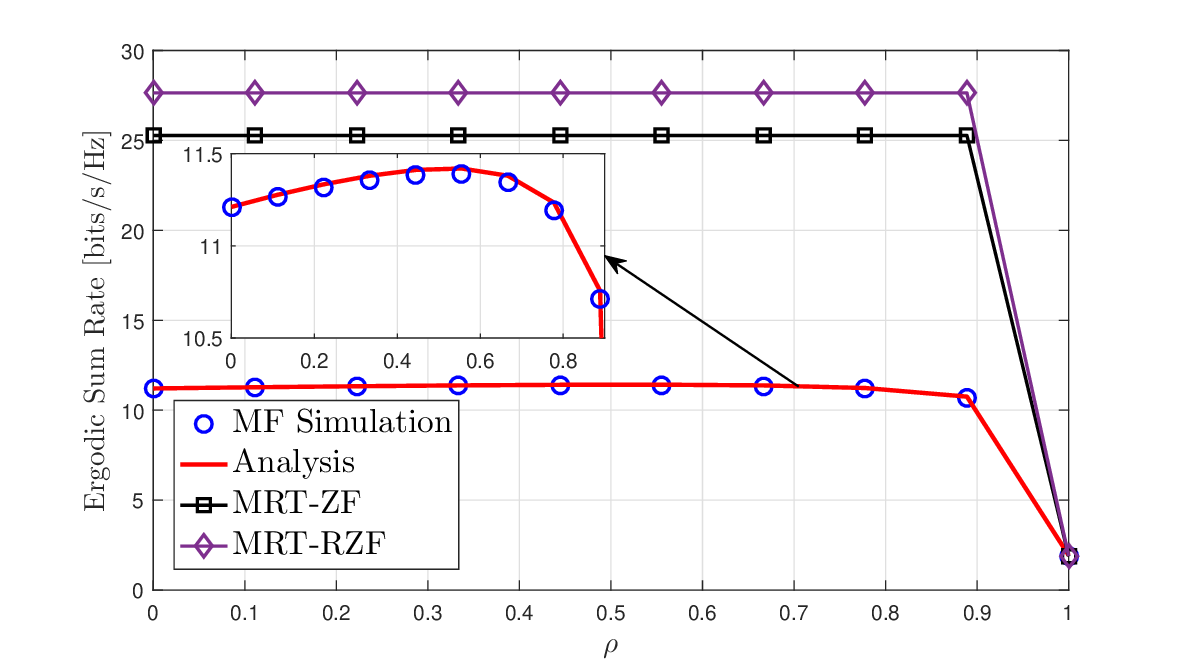}
             \vspace{-0.3cm}
        \caption{Ergodic sum rate versus $\rho$ for $L=16$ and $K=8$.}\vspace{-0.2cm}\label{ESR_Com_L16_ref}
\end{figure}

\section{Conclusion}
In this paper, we have developed an MF-precoded RSAM scheme, which offers significantly improved energy efficiency and reduced complexity. We have demonstrated that the MF-precoded RSMA exhibits the same delivery performance as the conventional MRT-MF precoded RSMA. In the massive MIMO regime and in imperfect CSIT, we have rigorously proved that the transmission rate for the common stream converges in distribution to a specific random variable, while the transmission rate for each private stream converges to a constant almost surely. Based on the convergence results, we derived the ergodic rates separately for decoding the common and private streams.  Finally, our numerical results have not only validated the accuracy of our analytical models but have also showcased the advantages of our proposed scheme over alternative approaches.



\appendices
\renewcommand{\thesectiondis}[2]{\Alph{section}:}

\section{Proof of Lemma~\ref{SINR_ck_lem}}\label{SINR_ck_lem_proof}
Dividing both the numerator and denominator of $\text{SINR}_{c,k}^{\rm mf}$ in \eqref{SINR_ck_def} by $||{\bf h}_k||^2 K^{-1}$, we obtain that
    \begin{align}\label{SINR_ck_rewrite}
       \text{SINR}_{c,k}^{\rm mf} = \frac{K\alpha_{\rm mf} \rho \big|{\bf u}_k^T\sum_{i=1}^K  {\bf h}_i^* \big|^2 + K\alpha_{\rm mf} \rho \sum_{i=1}^K \tilde \beta_i  }{\frac{K \sigma_k^2}{||{\bf h}_k||^2} + \bar \rho K \alpha_{\rm mf}  \sum_{i=1}^K \big( \big| {\bf u}_k^T {\bf h}_i^* \big|^2 + \tilde \beta_i   \big)},
    \end{align}
where ${\bf u}_k \triangleq {\bf h}_k /||{\bf h}_k||$ with unit-norm. For any $i, j \in \{1,2,\cdots,K\}$ and $i \neq j \neq k$,  we have that ${\bf u}_k^T {\bf h}_i \sim \mathcal{CN}(0, \beta_i)$ and ${\bf u}_k^T {\bf h}_j \sim \mathcal{CN}(0, \beta_j)$. We also have that
\begin{align}\label{indepen_proof_eq}
    &\mathbb{E}\big\{ {\bf u}_k^T {\bf h}_i {\bf h}_j^H   {\bf u}_k^* \big\}
    = {\rm Tr}\big\{   \mathbb{E}\big\{  {\bf h}_i   \big\} \mathbb{E}\big\{   {\bf h}_j^H  \big\}  \mathbb{E}\big\{ {\bf u}_k^* {\bf u}_k^T \big\}\big\} = 0,
\end{align}
which implies that ${\bf u}_k^T {\bf h}_i $ and ${\bf u}_k^T {\bf h}_j$ are uncorrelated. As ${\bf u}_k^T {\bf h}_i $ and ${\bf u}_k^T {\bf h}_j$ are both Gaussian distributed, they are independent of each other. Therefore, we have that
$
        \sum_{i=1,i\neq k}^K {\bf u}_k^T  {\bf h}_i^* \sim \mathcal{CN}\big(0, b \big),
$
where $b \triangleq \sum_{i=1, i\neq k}^K \beta_{i}$.  
Define the random variable $X'$ as
$
        X' \triangleq {\bf u}_k^T \sum_{i=1}^K  {\bf h}_i^* = ||{\bf h}_k|| + \sum_{i=1,i\neq k}^K {\bf u}_k^T  {\bf h}_i^*.
$
    It is easy to derive that $||{\bf h}_k|| $ follows a Nakagami distribution with the shape parameter $L$ and the spread parameter $\beta_k L $, denoted by $||{\bf h}_k|| \sim \text{Nakagami}(L,\beta_k L)$.   So $X'$ is the sum of a Nakagami distributed variable and a complex Gaussian distributed variable, which has the same definition as the channel gain over Shadowed-Rician fading channels (cf. \cite{Slim_Rician,Zhao_ICC}). 
    For that, we have the moment generating function (MGF) of $ X'$ as (cf. \cite[Eq. (7)]{Slim_Rician})
    \begin{align}
        \mathcal{M}_{X}(t) 
        & = \frac{1}{1-bt} \bigg( 1 -   \frac{\beta_k t}{1-bt}  \bigg)^{-L}, \ \text{  for } t < \frac{1}{b}. 
    \end{align}   
    Let $X = \frac{2}{b} X'$. The MGF of $X$ is of the form
    \begin{align}
        \mathcal{M}_{X}(t) = \mathcal{M}_{X'}\Big( \frac{2}{b} t  \Big) 
        = \frac{1}{1-2t} \bigg( 1 -   \frac{2\beta_k  t}{1-2t} \frac{1}{b} \bigg)^{-L},
    \end{align}
    where $t < \frac{1}{2}$.
    As $b = (\sum_{i=1}^K \beta_i) - \beta_k = K \beta_{\rm ave} - \beta_k = \frac{L}{\theta} \beta_{\rm ave} - \beta_k$. For $L,K \to \infty$ with a fixed ratio $\theta$, we can derive the limit MGF of $X$ as
    \begin{align}
        \mathcal{M}_{X}(t) 
        & = \frac{1}{1-2t} \lim_{L,K \to \infty} \bigg( 1 -   \frac{2\theta \beta_k  t/\beta_{\rm ave}}{1-2t} \frac{1}{ L  - \theta \beta_k/\beta_{\rm ave}} \bigg)^{-L} \notag\\
        & = \frac{1}{1-2t} \exp\bigg( \frac{2\theta \beta_k}{\beta_{\rm ave}} \frac{  t}{1-2t} \bigg), \text{ for } t < \frac{1}{2}
    \end{align}
    which equals the MGF of $\chi^2 \big(2,  \frac{2\theta \beta_k}{\beta_{\rm ave}} \big)$. Therefore,
    $
        X \stackrel{d.}{\longrightarrow} \chi^2 \big(2,  \frac{2\theta \beta_k}{\beta_{\rm ave}} \big).
   $
   For other terms in the numerator of \eqref{SINR_ck_rewrite}, we have that
  \begin{align}
      & \lim_{L,K \to \infty}  \frac{K \alpha_{\rm mf} b}{2} = \lim_{L,K \to \infty} \frac{K}{2}  \frac{P_t}{L K \hat \beta_{\rm ave} }  \Big( \frac{L}{\theta} \beta_{\rm ave} - \beta_k \Big) \notag\\
      &\hspace{1cm}= \lim_{L,K \to \infty} \frac{P_t}{2} \Big( \frac{\beta_{\rm ave}}{\theta \hat \beta_{\rm ave}} - \frac{\beta_k}{L \hat \beta_{\rm ave}} \Big) = \frac{ \beta_{\rm ave} P_t}{2 \theta \hat \beta_{\rm ave}}, \\
      & K \alpha_{\rm mf}  \sum_{i=1}^K \tilde \beta_k = \frac{P_t K}{L  \hat \beta_{\rm ave}} \frac{1}{K} \sum_{i=1}^K \tilde \beta_k= \frac{P_t \big( \hat \beta_{\rm ave} - \beta_{\rm ave} \big)}{\theta \hat \beta_{\rm ave}}. 
  \end{align}
  
 As $L,K \to \infty$ with a fixed ratio $\theta$, using the Strong Law of Large Numbers (SLLN) yields that
    \begin{align}
        &\frac{K}{||{\bf h}_k||^2}  \stackrel{a.s.}{\longrightarrow} \frac{1}{\theta \beta_k}, \ 
        K \alpha_{\rm mf} ||{\bf h}_k||^2   \stackrel{a.s.}{\longrightarrow} \frac{P_t \beta_k}{\hat \beta_{\rm ave}} \notag\\
        &\frac{1}{K-1} \sum_{i=1,i\neq k}^K \Big( \big| {\bf u}_k^T {\bf h}_i^* \big|^2 - \beta_i \Big)  \stackrel{a.s.}{\longrightarrow} 0,
    \end{align}
    which helps us derive the convergence result for the denominator of \eqref{SINR_ck_rewrite}, given by
    \begin{align}
         \frac{K \sigma_k^2}{||{\bf h}_k||^2} &+ \bar \rho K \alpha_{\rm mf}  \sum_{i=1}^K \big( \big| {\bf u}_k^T {\bf h}_i^* \big|^2 + \tilde \beta_i   \big) \notag\\
        &\stackrel{a.s.}{\longrightarrow} \frac{ \sigma_k^2}{\theta \beta_k} + \frac{\bar \rho P_t}{\theta} \Big( \frac{\theta \beta_k}{ \hat \beta_{\rm ave}} + 1 \Big).
    \end{align}

    All the terms (except $X$) in \eqref{SINR_ck_rewrite} converge to constants almost surely, so $\text{SINR}_{c,k}^{\rm mf}$ has the following convergence:   
    \begin{align}
        \text{SINR}_{c,k}^{\rm mf} 
        &\stackrel{d.}{\longrightarrow} \frac{\frac{ \beta_{\rm ave} \beta_k \rho P_t}{2  \hat \beta_{\rm ave}} X + \frac{\rho  P_t \beta_k}{\hat \beta_{\rm ave}}  \big( \hat \beta_{\rm ave} - \beta_{\rm ave} \big)}{\sigma_k^2 + \bar \rho  \beta_k P_t \big( \frac{\theta \beta_k}{ \hat \beta_{\rm ave}} + 1  \big) }. 
    \end{align}
    Considering that $R_{c,k}^{\rm mf} = \log_2 (1+\text{SINR}_{c,k}^{\rm mf})$ and then using the Continuous Mapping Theorem, we can derive the convergence result for $R_{c,k}^{\rm mf}$ in \eqref{R_ck_ergodic_eq}.

	\bibliographystyle{IEEEtran}				
	\bibliography{IEEEabrv,aBiblio}			

\end{document}